\documentclass[11pt]{article}
\usepackage[margin=1.15in]{geometry}
\usepackage{amsmath,amssymb,amsthm}
\usepackage{booktabs}
\usepackage{graphicx}
\graphicspath{{../}{./}}
\usepackage[colorlinks=true,linkcolor=blue,citecolor=blue,urlcolor=blue]{hyperref}

\newtheorem{theorem}{Theorem}
\newtheorem{lemma}{Lemma}
\newtheorem{conjecture}{Conjecture}
\theoremstyle{definition}
\newtheorem{definition}{Definition}
\theoremstyle{remark}
\newtheorem{remark}{Remark}

\newcommand{\OO}{\Omega}

\title{One Shot, Twenty-One Balls:\\
Existence and Rarity of a Total Clearance\\
in a Single Stroke of Snooker}
\author{Avner Kantor\\[2pt]
\small Israel Central Bureau of Statistics}
\date{}

\begin{document}
\maketitle

\begin{abstract}
Snooker folklore holds that no single stroke can pocket all
twenty-one object balls. We examine the claim in an idealized but
fully specified model of billiard dynamics. Within the model we
exhibit an admissible configuration of the twenty-two balls and a
stroke of the cue ball that pockets all twenty-one object balls, and
we show that the set of such strokes has positive Lebesgue measure in
the natural shot space: total clearances are not flukes of measure
zero but open events. For the regulation opening configuration we
conjecture the same and explain both why a simulation cannot settle
the conjecture by brute force and what kind of computation could
settle it in principle. Monte Carlo experiments in the same model
estimate the probability $P(k)$ that a uniformly random stroke
pockets exactly $k$ balls; the observed decay of $P(k)$, extrapolated
conditionally on the conjecture, places the probability of a total
clearance from the break far beyond anything observable. The folk
claim is thus right in practice and wrong in principle, and the gap
between the two is exactly the distance between measure zero and
unobservably small.
\end{abstract}

\section{Introduction}

Ask a snooker player whether one stroke can sink all twenty-one
object balls and you will get a smile and a firm no. The smile is
justified; we will argue the no is not, at least not in the sense
usually meant. There are three different claims hiding in the
folklore:
(i) no legal stroke can possibly pot all twenty-one balls;
(ii) from the regulation opening position, no stroke pots all
twenty-one balls;
(iii) you will never see it happen.
Claim (iii) is true beyond reasonable doubt, and we will quantify
just how true. Claim (i) is false in any reasonable idealization of
the game, and the point of this article is that it fails in a robust
way \emph{in principle}: the successful strokes form a set of
positive measure, so a total clearance is not a lone knife-edge
coincidence but occupies genuine volume in the shot space. That
volume, we will see, can be astronomically small---for a
twenty-one-ball clearance it is far below any humanly or even
numerically resolvable width---so ``positive measure'' and
``observably robust'' part company, and the gap between them is one of
the things this article is about. Claim (ii) we
believe to be false as well, but the question turns out to sit in an
interesting computational blind spot: it is decidable in principle by
a finite certified computation, yet hopeless to decide by direct
sampling.

The mathematical content is elementary in the best sense. Billiard
dynamics between collisions is smooth and explicit; a total clearance
designed with strict margins survives small perturbations because a
finite composition of continuous maps is continuous; and an open
nonempty set has positive measure. The pleasure is in the
construction, in saying precisely what model makes the statement
true, and in seeing how the same model measures the astronomical
rarity that protects the folklore. Mathematically rigorous statements
about idealized billiards have a long tradition
\cite{sinai1970,tabachnikov2005}; our contribution is to point that
tradition at a bar bet.

\section{The model}
\label{sec:model}

We work with a fully explicit model, called Model M below. Stating it
precisely is not pedantry: we will see in
Remark~\ref{rem:touching} that the standard opening position of
snooker, taken literally, does not even define a dynamical system, so
some care about the model is forced on us.

\begin{definition}[Table, balls, pockets]
The table is the closed rectangle $T = [0, L] \times [0, W]$ with
$L = 3.569$ and $W = 1.778$ (meters; the playing area of a full-size
table). Balls are closed disks of common radius $R = 0.02625$. There
are six pocket points on the boundary: the four corners and the two
midpoints of the long sides. A ball is \emph{pocketed} (and removed
from the table) at the first instant its center enters the open disk
of radius $\rho = 0.055$ about a pocket point.
\end{definition}

\begin{definition}[Dynamics]
\label{def:dynamics}
Between events, a moving ball decelerates at the constant rate
$a = 0.10$ m/s$^2$ antiparallel to its velocity, so its center
travels along a straight segment, quadratically in time, until it
stops. Ball-ball collisions are instantaneous, frictionless, and
equal-mass, with normal restitution $e = 0.95$: the velocity
components along the line of centers exchange in the usual restitution
combination, tangential components are unchanged. A cushion reverses
the normal velocity component and scales it by $e_c = 0.90$. Spin is
absent from the model; Remark~\ref{rem:spin} discusses why this
loses nothing for our purposes.
\end{definition}

\begin{definition}[Configurations and shots]
A \emph{configuration} $c$ places the cue ball and the twenty-one
object balls at pairwise distances strictly greater than $2R$, all
disks inside $T$, no center inside a pocket disk. We call such $c$
\emph{admissible}. A \emph{shot} is a pair
$x = (V_0, \varphi) \in \OO = (0, V_{\max}] \times [0, 2\pi)$ with
$V_{\max} = 12$ m/s: the cue ball is set in motion with speed $V_0$
in direction $\varphi$. Given an admissible $c$ and a shot $x$, the
dynamics of Definition~\ref{def:dynamics} runs until all balls are at
rest or pocketed; $N(c, x) \in \{0, 1, \dots, 21\}$ denotes the
number of object balls pocketed.
\end{definition}

We chose to say \emph{admissible} rather than \emph{legal}. A legal
position of snooker is one reachable from the opening position by
legal play, and we make no claim that our witness configuration is
reachable in that sense; nothing in the mathematics needs it, and the
folklore claim (i) concerns physical possibility, not the rule book.

\begin{remark}[Why no spin]
\label{rem:spin}
Real strokes impart follow, draw, and side, and real balls slide
before they roll. Including spin would enlarge the shot space by two
or more dimensions and complicate the collision law, but it would
not change the logic of anything below: the existence theorem needs
only that the final state depends continuously on the shot near one
well-chosen shot, and that argument is indifferent to the dimension
of the shot space or the detailed smooth collision law. A witness in
the spinless model, struck with zero spin, remains a witness in any
enlarged model whose dynamics agrees with Model M on the zero-spin
slice. What genuinely changes with spin is the quantitative part,
the estimated $P(k)$; we return to this in
Section~\ref{sec:montecarlo}.
\end{remark}

\begin{remark}[The touching rack is not a dynamical system]
\label{rem:touching}
In the regulation opening position the fifteen reds touch. The first
impact then produces simultaneous multi-ball contacts, and the
outcome of a simultaneous contact is not determined by pairwise
restitution laws: resolving the contacts in different micro-orders
yields genuinely different outgoing velocities, a phenomenon
well known in nonsmooth mechanics \cite{ballard2000}. In other
words, for the literal opening rack, $N(c_0, x)$ is not defined by
the model until an arbitrary tie-breaking convention is chosen. We
therefore work throughout with the $\varepsilon$-separated opening
configuration $c_0^{\varepsilon}$, the regulation positions with the
reds spread by $\varepsilon = 0.5$ mm so that all pairwise gaps are
positive. This is not a dodge but a modeling necessity, and every
statement about the break below refers to $c_0^{\varepsilon}$.
\end{remark}

\section{A total clearance exists, robustly}
\label{sec:theorem}

Call $x \in \OO$ \emph{regular} for the admissible configuration $c$
if the resulting evolution consists of finitely many events (ball-ball
collisions, cushion contacts, pocketings, stops), all
\emph{transversal}: at each ball-ball or cushion contact the normal
approach speed is strictly positive, each pocketing crosses the
pocket-disk boundary with strictly positive speed, no two events are
simultaneous, and no ball comes to rest exactly at a contact.

\begin{lemma}[Continuity off the singular set]
\label{lem:continuity}
Let $c$ be admissible and let $x^{*}$ be regular for $c$. Then
$N(c, \cdot)$ is constant on a neighborhood of $x^{*}$ in $\OO$.
\end{lemma}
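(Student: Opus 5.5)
The argument is an induction over the finite, ordered list of events of the regular evolution from $x^{*}$. Write $0<t_1<t_2<\dots<t_m$ for the event times and $E_1,\dots,E_m$ for the events: ball--ball collisions, cushion contacts, pocketings, and stops. Let $S_j$ be the full state just after $t_j$: the positions and velocities of every ball still on the table, together with the set of pocketed balls. The claim to prove is the following. For every $\delta>0$ there is a neighborhood $U$ of $x^{*}$ such that every $x\in U$ produces the same event sequence $E_1,\dots,E_m$ (same type, same participating balls, same order), with event times and post-event states within $\delta$ of those for $x^{*}$. Since $N(c,x)$ is the number of pocketing events among $E_1,\dots,E_m$, it follows that $N(c,\cdot)$ is constant on $U$. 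The base case is the initial state. It is $(c,(V_0\cos\varphi,V_0\sin\varphi))$, which depends continuously on $x$, and $V_0>0$ makes it a genuine moving state.

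For the inductive step, suppose the state just after $E_{j-1}$ depends continuously on $x$ near $x^{*}$. Between events each moving ball follows an explicit formula, $p(t)=p+v\,t/|v|\cdot(|v|t-\tfrac12 a t^2)/(|v|t)$, that is, displacement $|v|t-\tfrac a2t^2$ along $v/|v|$ up to the stopping time $|v|/a$. Balls at rest stay put. Each potential next event is the first zero of a smooth function $g$ of $(t,\text{state})$:
\begin{itemize}
\item[] $|p_i-p_k|^2-4R^2$ for a ball--ball contact,
\item[] the distance to a cushion line minus $R$ for a cushion contact,
\item[] $|p_i-q|^2-\rho^2$ for a pocket point $q$,
\item[] $|v_i|-a t$ for a stop.
\end{itemize}
Regularity says that at $t_j$ exactly one of these vanishes and does so transversally, $\partial_t g\neq 0$. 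The normal approach speed is nonzero for contacts and pocket crossings, and for stops $\partial_t g=-a\neq 0$. The implicit function theorem then gives a perturbed zero $t_j(x)$ depending continuously on $x$. The post-event map (restitution exchange, cushion reflection, removal, setting velocity to zero) is continuous, indeed linear, in the pre-event state. So the state after $E_j(x)$ is continuous in $x$.

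The main obstacle is showing that no \emph{new} event sneaks in before $t_j(x)$, and that the occurrences of $E_j$ keep the same type. I would handle it by compactness. On the closed interval $[t_{j-1}+\eta,\,t_j-\eta]$, every other event function is bounded away from zero for $x^{*}$: balls that are not colliding keep a gap above $2R$, centers stay outside the pocket disks, and moving balls have speed bounded below. These strict inequalities persist for $x$ near $x^{*}$ by uniform continuity. Near the endpoints, transversality and the non-simultaneity assumption keep every other function bounded away from zero on a short window. So for small $\eta$ the only zero there is the one tracked by the implicit function theorem, and it is a simple crossing. For instance, a grazing ball cannot become a real collision because the approach speed at $x^{*}$ is strictly positive.

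Two further points need care. The clause that no ball comes to rest exactly at a contact ensures that stops and contacts do not swap order. After the final event $E_m$, all remaining balls are at rest with strict gaps, so the terminal state is stable. Taking $U$ to be the intersection of the $m$ neighborhoods, together with a ball in $\Omega$ around $x^{*}$, completes the proof.
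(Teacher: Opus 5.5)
Your proposal is correct and takes the same route as the paper: clock functions for each event, the implicit function theorem at each transversal event, continuity of the event maps, and the observation that nearby shots realize the same finite composition. Your compactness argument against spurious intermediate events and your check of the terminal rest state fill in what the paper compresses into ``locally the same composition,'' and two small repairs are needed: the displayed position formula is garbled (your verbal version, displacement $|v|t-\tfrac a2t^2$ along $v/|v|$, is the correct one), and just after $t_{j-1}$ the clock of the pair (or ball and cushion) involved in $E_{j-1}$ is zero rather than bounded away from zero, so on that window you must use its positive separation speed, $e$ resp.\ $e_c$ times the approach speed, instead.
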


\begin{proof}
Between events the state (positions and velocities of the active
balls) evolves by an explicit smooth flow: each moving center follows
$p + v t - \tfrac{a}{2} t^{2} \hat v$ until its stopping time
$|v| / a$. Each upcoming event time is the first zero of a smooth
clock function of the state (pairwise gap minus $2R$; distance to a
cushion line; distance to a pocket point minus $\rho$; a speed).
Transversality says precisely that the clock has a simple zero with
nonvanishing time derivative there, and that no other clock vanishes
at the same instant. By the implicit function theorem the event time,
and hence the state just before the event, depends continuously on
the initial data; the event maps themselves (the restitution
exchange, the cushion reflection, removal of a pocketed ball, setting
a stopped velocity to zero) are continuous where the relevant clock
is the unique active one. The evolution up to any fixed number of
events is therefore a finite composition of continuous maps, and it
is locally the same composition: sufficiently small perturbations of
$x^{*}$ produce the same finite sequence of event types involving the
same balls. In particular the set of pocketed balls, and so
$N(c, \cdot)$, is locally constant at $x^{*}$.
\end{proof}

The lemma isolates all the analysis; what remains is engineering.

\begin{lemma}[Witness]
\label{lem:witness}
There exist an admissible configuration $c^{*}$ and a shot
$x^{*} \in \OO$, regular for $c^{*}$, with $N(c^{*}, x^{*}) = 21$.
\end{lemma}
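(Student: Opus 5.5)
We prove the lemma by an explicit construction that chains the twenty-one pots into a sequence whose every event is transversal by design. The guiding idea is a \emph{relay chain}. The cue ball strikes object ball $b_1$ head-on. With restitution $e=0.95$, a head-on equal-mass collision transfers the fraction $(1+e)/2$ of the normal speed to the struck ball and leaves the striker with $(1-e)/2$ of it. We arrange the balls so that each ball's path leads it to a further ball and eventually into a pocket. The cleanest version avoids long cascades, because speed decays geometrically along a relay. Instead we fan the balls out so that the energy-carrying ball is reused.

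Concretely, we would place object balls $b_1,\dots,b_{21}$ each very close to a pocket mouth, say three or four per pocket region. Each sits at distance slightly more than $2R$ from its pocket disk along a line toward the pocket point. A single \emph{carrier} is then routed past all of them. Take the cue ball as the carrier, travelling at initial speed near $V_{\max}=12$ m/s. It grazes each object ball with a small but strictly positive normal approach speed, knocking that ball a few centimetres into the open pocket disk of radius $\rho$. Because each such collision is nearly tangential, the carrier keeps almost all its speed and direction; the small deflections are computed and compensated by placing the next target accordingly. Cushion reflections (scaled by $e_c=0.90$) route the carrier around the table from pocket region to pocket region. The number of cushion contacts is bounded (of order ten), so the total speed loss from cushions is a factor around $0.9^{10}\approx 0.35$. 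The loss from deceleration over a path of roughly $30$ m at $a=0.10$ is small compared with $V_0^2=144$. The carrier therefore arrives at every target with speed well above what is needed.

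The key steps, in order, are as follows. (1) Fix the carrier's nominal polygonal route from its initial position, with its cushion points, and verify by the explicit kinematics that it reaches the end of the route with positive speed. (2) Along each straight leg, place the targets: for a prescribed small cut angle, put the target's center so that the line of centers at contact points toward the intended pocket. The struck ball then moves along that line with speed $\tfrac{1+e}{2}$ times the normal component, and it need only travel a few centimetres to enter the pocket disk. (3) Update the carrier's velocity by the collision law and re-aim the remainder of the route; a small rotation of later target positions absorbs the deflection. (4) Check admissibility: all pairwise gaps exceed $2R$, and no center lies in a pocket disk initially. (5) Check regularity. Each struck ball's path to its pocket must miss every other ball and the carrier's later path, with positive clearance. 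The carrier's path must miss all not-yet-intended balls by a positive margin. No two events may be simultaneous, which generic choice of positions ensures after a tiny perturbation. No ball may stop at a contact, and since pocketed balls are removed, struck balls never stop.

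The main obstacle is step (5) combined with geometric crowding. Six pockets with mouth regions of radius about $\rho$ must host twenty-one targets plus their exit paths, while the carrier threads between them. Our first attempt would use approach paths along the cushions, where targets hug the rail near the pocket and the carrier runs parallel to the rail. Targets can then also be placed at mid-rail positions and struck into the pockets by longer rail shots, since these travel a full half-rail at modest speed. If the inequalities prove too tight, we fall back to a hybrid. Some targets would be pocketed by secondary balls: a previously struck ball clips another near the same pocket, with both finishing in the pocket. All strict inequalities would be verified with explicit margins, most honestly by a certified interval computation. Once the nominal trajectory has all margins strictly positive, regularity holds by definition, and $N(c^*,x^*)=21$.
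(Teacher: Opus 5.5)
Your plan is essentially the paper's own construction. It is a single-carrier chain: the cue ball makes near-tangential cuts (the paper's $\theta$ near $\pi/2$) so that it keeps almost all its speed, and it reaches the targets by cushion rebounds. Each target is placed only after its predecessors are fixed, against the carrier's actual post-collision state, with the line of centers aimed at a pocket, and all margins are then checked to be strictly positive.

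The one real difference is layout. The paper spreads the targets over the whole table with long pocketing runs instead of clustering them at the pocket mouths. That sidesteps the crowding you flag, which is genuine: a ball driven straight at a corner point reaches the capture disk before touching a rail only if its ray makes roughly $28.5^\circ$ to $61.5^\circ$ with the rail, so a corner holds just two such targets near radius $\rho+2R$. The spread layout also makes your object-on-object fallback unnecessary. In either version the lemma finally rests on an explicit instance verified in the integrator, which the paper supplies in Appendix~\ref{app:witness} and your proposal leaves as the step still to be carried out.
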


\begin{proof}[Construction]
The witness is a \emph{single-carrier chain}. The cue ball is the
only carrier: it strikes twenty-one object balls in succession, and
at each collision the struck ball departs along the line of centers
aimed at a pocket center while the cue is deflected onward to the next
ball. No object ball ever strikes another; every pocketing is a
struck ball sent directly to a pocket by the cue, and the cue itself
is not pocketed. That accounts for all twenty-one object balls in
twenty-one collisions.

Two facts make the chain reach length twenty-one. First, at a cut
through angle $\theta$ (the angle between the cue's incoming velocity
and the line of centers) the struck ball leaves with speed factor
$\tfrac{1+e}{2}\cos\theta$ and the cue retains the complementary
tangential component; a \emph{thick} cut ($\theta$ near $\pi/2$)
therefore costs the cue almost none of its speed while still sending
the struck ball off with enough to reach a pocket. Choosing thick
cuts lets a single cue survive all twenty-one collisions, using
cushion rebounds between strikes to reach balls in every region of
the table. Second, each ball is placed \emph{after} its predecessors
are fixed, by reading the cue's actual post-collision state from an
event-driven integrator of Definition~\ref{def:dynamics} and solving,
by elementary root finding, for the ball position whose line of
centers points at a chosen pocket. Because every placement is chosen
against, and then re-verified in, the same integrator that defines
the dynamics, the planned collision sequence and all twenty-one
pocketings are exact by construction rather than by an independent
analytic prediction that might drift from the dynamics. All margins
(clearances between balls not meant to touch, gaps between event
times, entry speeds into pockets) are strictly positive, and the shot
$x^{*}$ is regular in the sense above. Appendix~\ref{app:witness}
lists the coordinates, the shot, the pocketing log, and the verified
margins.
\end{proof}

\begin{theorem}
\label{thm:main}
There is an admissible configuration $c^{*}$ for which
$\{\, x \in \OO : N(c^{*}, x) = 21 \,\}$ contains a nonempty open
set, and hence has positive Lebesgue measure.
\end{theorem}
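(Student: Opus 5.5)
The plan is to combine the two lemmas directly. First, invoke Lemma~\ref{lem:witness} to obtain an admissible configuration $c^{*}$ and a shot $x^{*} = (V_0^{*}, \varphi^{*}) \in \OO$ that is regular for $c^{*}$ and satisfies $N(c^{*}, x^{*}) = 21$. Second, apply Lemma~\ref{lem:continuity} to this pair: since $x^{*}$ is regular, there is a neighborhood $U$ of $x^{*}$ in $\OO$ on which $N(c^{*}, \cdot)$ is constant, hence identically equal to $21$. Thus $\{x \in \OO : N(c^{*}, x) = 21\} \supseteq U$.

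The remaining step is to check that $U$ contains a nonempty set that is open in $\mathbb{R}^2$, so that it has positive two-dimensional Lebesgue measure. This is where one has to be slightly careful, because $\OO = (0, V_{\max}] \times [0, 2\pi)$ is not open in $\mathbb{R}^2$. A neighborhood in $\OO$ of a boundary point, such as $V_0^{*} = V_{\max}$ or $\varphi^{*} = 0$, is only relatively open. I would handle this in one of two ways.
\begin{itemize}
\item Observe that any relative neighborhood of a point of $\OO$ still contains a small product set $(V_0^{*} - \delta, V_0^{*}] \times [\varphi^{*}, \varphi^{*} + \delta)$, which contains the open rectangle $(V_0^{*} - \delta, V_0^{*}) \times (\varphi^{*}, \varphi^{*} + \delta)$. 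This rectangle is nonempty and open in $\mathbb{R}^2$, so it has measure $\delta^2 > 0$.
\item Alternatively, note from Appendix~\ref{app:witness} that $0 < V_0^{*} < V_{\max}$, and read $\varphi$ modulo $2\pi$, so that $x^{*}$ is an interior point and $U$ may be taken to be an open disk.
\end{itemize}
The first way is robust and does not depend on the particular witness, so I would use it.

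There is no real obstacle at the level of the theorem; all the analysis sits in Lemma~\ref{lem:continuity} and all the engineering in Lemma~\ref{lem:witness}. The one point worth stating explicitly is that "locally constant" in Lemma~\ref{lem:continuity} means constant on a relative neighborhood in $\OO$. Any such neighborhood contains an open rectangle of the kind above, and positivity of the measure then follows because Lebesgue measure is positive on nonempty open sets.
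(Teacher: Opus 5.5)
Your proposal is correct and matches the paper's proof, which combines Lemma~\ref{lem:witness} (a regular witness with $N = 21$) with Lemma~\ref{lem:continuity} (local constancy of $N$ at regular shots). Your extra care about $\Omega$ not being open in $\mathbb{R}^2$ is a harmless refinement that the paper leaves implicit, and for the actual witness ($V_0^{*} = 11.9 < V_{\max}$, $\varphi^{*} = 0.00442$) the shot is an interior point anyway.
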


\begin{proof}
Immediate from Lemmas~\ref{lem:continuity} and \ref{lem:witness}.
\end{proof}

\begin{remark}[Status of the verification]
The witness is verified numerically, in double precision: the shot
$x^{*}$ is replayed in an event-driven integrator of
Definition~\ref{def:dynamics} and produces exactly twenty-one
pocketings, with the individual margins that enter regularity
(pocket-entry speeds, inter-ball clearances, event-time gaps) all
strictly positive and many orders of magnitude above the rounding
scale. Regularity of $x^{*}$ is what Lemma~\ref{lem:continuity}
consumes, and it yields an open neighborhood of $x^{*}$ on which
$N \equiv 21$; that neighborhood is what carries the positive-measure
conclusion. A purist can upgrade the verification to a rigorous
certificate by interval arithmetic: the trajectory is a finite
composition of explicit algebraic maps, so enclosing it is routine,
and the margins leave the intervals ample room. We report the
double-precision margins and flag the theorem as computer-assisted.
\end{remark}

\begin{remark}[Positive measure versus observable width]
It is worth stating plainly what the theorem does and does not
quantify. Lemma~\ref{lem:continuity} guarantees that the success set
around a regular witness is open, hence of positive Lebesgue measure;
it says nothing about how large that measure is. For our
twenty-one-collision witness the guaranteed neighborhood is in fact
extraordinarily thin: scanning the shot parameters $(V_0, \varphi)$
outward from $x^{*}$, the interval on which all twenty-one balls still
fall shrinks below the resolution of a double-precision grid (a change
of order $10^{-6}\,\mathrm{m/s}$ in $V_0$ already spoils the
clearance). This is not a defect of the proof but a feature of long
collision chains: sensitivity compounds multiplicatively across
twenty-one successive cuts, so the regular set, while genuinely open,
is unobservably small. Shorter chains built by the same construction
have visibly fat success sets---a three-ball chain, for instance,
clears over a $V_0$-window of order $10^{-1}\,\mathrm{m/s}$, five
orders of magnitude wider---which is exactly the qualitative gap
between ``measure zero'' and ``unobservably small'' that the
introduction promised, now made concrete at both ends. The
positive-measure statement of Theorem~\ref{thm:main} is therefore
mathematically robust and physically fragile, and both halves matter:
the total clearance is possible in principle, and invisible in
practice.
\end{remark}

\section{The opening break}
\label{sec:break}

\begin{conjecture}
\label{conj:break}
$\{\, x \in \OO : N(c_0^{\varepsilon}, x) = 21 \,\}$ is nonempty
(and then, by Lemma~\ref{lem:continuity} applied at a regular
witness, of positive measure) for $\varepsilon = 0.5$ mm.
\end{conjecture}

Why should this be true? Nothing in the construction of
Lemma~\ref{lem:witness} used freedom to place balls generously; it
used freedom to place them exactly. The opening configuration is a
single point in configuration space, but the shot space still offers
a two-parameter family, and the dynamics from a tight pack is
ferociously rich: the first impact sprays fifteen reds into a
recirculating cascade of secondary collisions. Richness is not a
proof, and we are honest that the conjecture is open even at the
level of overwhelming numerical evidence: no simulated stroke below
comes remotely close to twenty-one.

Two computational points deserve separation. First, the conjecture
is \emph{decidable in principle by simulation}: a single stroke
$x^{*}$ found by any search whatsoever, then certified by interval
arithmetic to be regular with $N = 21$, proves it, by
Lemma~\ref{lem:continuity}. The obstacle is not verification but
search: success sets, if nonempty, live at the bottom of a chaotic
landscape in which each additional required pocketing multiplies the
target measure by something like the factors estimated in
Section~\ref{sec:montecarlo}, and twenty-plus collisions of
sensitivity compound beyond what double precision can even represent
reliably along the way. Second, the conjecture is \emph{not
decidable by direct Monte Carlo}: no feasible number of random
strokes distinguishes an event of probability $10^{-40}$ from an
impossible one. Sampling can only ever bound $P(21)$ from above; a
single certified trajectory settles existence. The asymmetry between
the two computations is the heart of the matter.

\section{How rare is rare?}
\label{sec:montecarlo}

We estimate $P(k)$, the probability that a stroke drawn uniformly
from $\OO$ pockets exactly $k$ object balls from
$c_0^{\varepsilon}$, by direct simulation of Model M with an
event-driven integrator (the same code that verifies the witness;
this is deliberate, so that every number in the paper refers to one
and the same dynamical system). Over $n = 30{,}000$ independent
strokes with a fixed seed, the counts fall off cleanly and
monotonically; intervals below are Wilson score intervals at $95\%$.

\begin{center}
\begin{tabular}{rrll}
\toprule
$k$ & count & $\hat P(k)$ & 95\% Wilson CI \\
\midrule
0 & 18489 & 6.16e-01 & [6.11e-01,\,6.22e-01] \\
1 & 7811 & 2.60e-01 & [2.55e-01,\,2.65e-01] \\
2 & 2564 & 8.55e-02 & [8.24e-02,\,8.87e-02] \\
3 & 814 & 2.71e-02 & [2.54e-02,\,2.90e-02] \\
4 & 251 & 8.37e-03 & [7.40e-03,\,9.46e-03] \\
5 & 58 & 1.93e-03 & [1.50e-03,\,2.50e-03] \\
6 & 10 & 3.33e-04 & [1.81e-04,\,6.14e-04] \\
7 & 3 & 1.00e-04 & [3.40e-05,\,2.94e-04] \\
\bottomrule
\end{tabular}

\end{center}

\IfFileExists{pk_decay.pdf}{%
\begin{figure}[ht]
\centering
\includegraphics[width=0.72\linewidth]{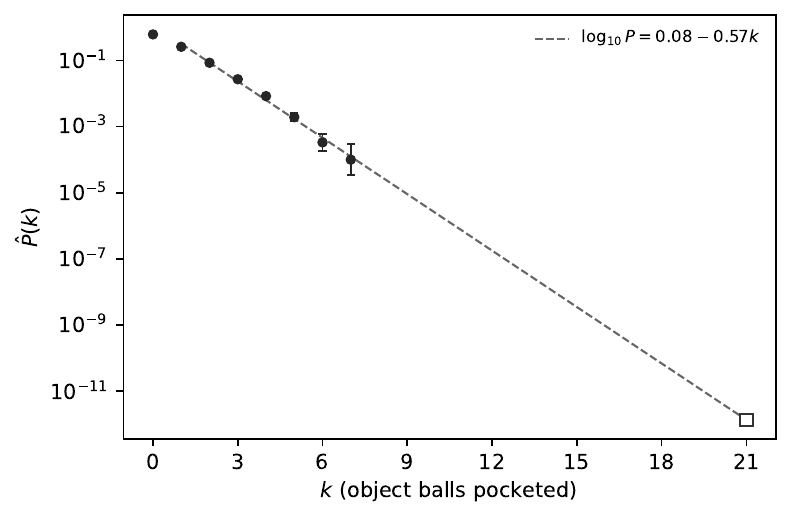}
\caption{Estimated $P(k)$ with $95\%$ Wilson intervals, the
log-linear fit over the populated range ($\log_{10}\hat P(k)
\approx 0.08 - 0.57\,k$), and its extrapolation to $k = 21$ (open
square, $\sim 10^{-12}$). The extrapolated value is conditional on
Conjecture~\ref{conj:break} and is a heuristic, not an estimate.}
\label{fig:pk}
\end{figure}}{}

Three comments on reading the numbers. First, the log-linear decay
visible over the populated range (a straight line of slope
$b \approx -0.57$ per ball on the $\log_{10}$ scale, the largest
observed value being $k = 7$) invites extrapolation to
$k = 21$, and we report the extrapolated value, $\sim 10^{-12}$, but
its logical status must be kept straight: extrapolation cannot
distinguish a tiny positive $P(21)$ from an exact zero, so the number
is meaningful only conditional on Conjecture~\ref{conj:break}, and
even then it assumes the decay law persists through a regime the data
never probes. Second, the estimates depend on model parameters that
are only physically approximate; a sensitivity grid over the friction
rate $a \in \{0.05, 0.10, 0.15\}$ and the restitution
$e \in \{0.90, 0.95, 0.98\}$ confirms that the qualitative decay,
which is all the argument uses, is stable: every one of the nine
cells shows a steep negative log-linear slope with an excellent fit.

\begin{center}
\begin{tabular}{llrrr}
\toprule
$a$ & $e$ & slope $b$ & $R^2$ & max $k$ \\
\midrule
0.05 & 0.9 & -0.427 & 0.976 & 6 \\
0.05 & 0.95 & -0.356 & 0.991 & 7 \\
0.05 & 0.98 & -0.333 & 0.976 & 7 \\
0.1 & 0.9 & -0.554 & 0.992 & 7 \\
0.1 & 0.95 & -0.504 & 0.985 & 5 \\
0.1 & 0.98 & -0.458 & 0.986 & 6 \\
0.15 & 0.9 & -0.642 & 0.997 & 5 \\
0.15 & 0.95 & -0.686 & 0.999 & 5 \\
0.15 & 0.98 & -0.623 & 0.989 & 5 \\
\bottomrule
\end{tabular}

\end{center}

\noindent Higher friction steepens the decay, as one would expect
(less energy in the system reaches fewer balls), but in no cell does
the exponential character weaken. Third, the largest $k$ observed in
the run is itself a useful falsifiable benchmark: we invite readers
to beat the record shot (seed and parameters in the package) with any
search method; every improvement is a data point on the road to
Conjecture~\ref{conj:break}.

Direct estimation of $P(21)$ is not merely hard but structurally
hopeless: at any rate suggested by the observed decay, the expected
number of successes over centuries of continuous simulation is far
below one. The honest route to smaller probabilities is rare-event
methodology, for instance subset simulation through the nested
levels $\{N \ge k\}$, estimating conditional factors
$P(N \ge k+1 \mid N \ge k)$ one at a time. We propose this, and the
search for a certified break witness, as the two serious follow-up
computations.

\section{Discussion}

The folklore is best restated as three true sentences. A total
clearance in one stroke is possible in the model, and forms a set of
positive measure there---robust in principle, though for the
twenty-one-ball witness that robustness is of unobservably small width
(Theorem~\ref{thm:main} and the remark following it). From the
opening break it is, we believe,
also possible, and settling that belief is a well-posed finite
computation that current search methods cannot yet perform
(Section~\ref{sec:break}). And it will never be observed, because
its probability, if positive, is protected by an exponential decay
that our experiments measure in its shallow end
(Section~\ref{sec:montecarlo}). The bar bet, in other words,
confuses measure zero with unobservably small. Trick-shot artists
have always lived in the gap between the two: an engineered
multi-ball finish is nothing but a witness configuration with humanly
achievable margins, and the single-carrier chain of
Lemma~\ref{lem:witness} is a formalized trick shot---one whose margins
are real but, at twenty-one balls, far too fine for any human cue. What mathematics
adds is the guarantee that the margins exist at all, and a clear
account of why the guarantee for the break position, claim (ii),
remains open: it sits exactly where verification is easy and search
is hopeless, a combination that will be familiar to anyone who has
thought about needles in high-dimensional haystacks.

\appendix

\section{The witness}
\label{app:witness}

\IfFileExists{witness_layout.pdf}{%
\begin{figure}[ht]
\centering
\includegraphics[width=0.95\linewidth]{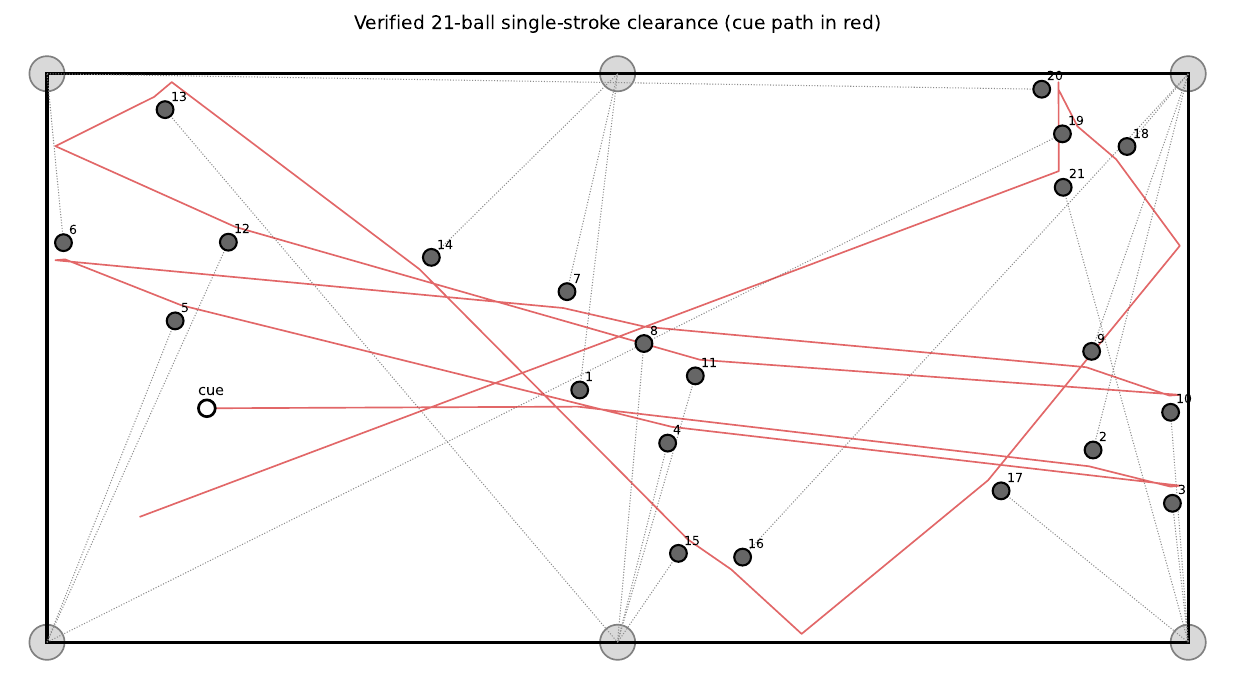}
\caption{The verified witness: the cue ball (open) strikes all
twenty-one object balls (dark) in one stroke, its path shown in red;
each struck ball's route to its target pocket is dotted. Pocket
capture disks are shaded. The cue is the sole carrier and is not
itself pocketed.}
\label{fig:witness}
\end{figure}}{}

\paragraph{Model instance.} $L=3.569$, $W=1.778$, $R=0.02625$, $a=0.10$, $e=0.95$, $e_c=0.90$, $\rho=0.055$ (SI).
\paragraph{The shot.} Cue at $(0.5000,\,0.7318)$, $V_0=11.9$ m/s, $\varphi=0.00442$ rad.
\begin{center}\begin{tabular}{rrrl}
\toprule
\# & $x$ (mm) & $y$ (mm) & pocket \\
\midrule
1 & 1666.0 & 789.0 & M-t \\
2 & 3271.5 & 601.3 & C-tr \\
3 & 3519.6 & 434.6 & C-br \\
4 & 1941.1 & 622.8 & M-b \\
5 & 400.8 & 1005.1 & C-bl \\
6 & 51.6 & 1249.7 & C-tl \\
7 & 1626.3 & 1096.5 & M-t \\
8 & 1866.9 & 934.3 & M-b \\
9 & 3267.0 & 909.8 & C-tr \\
10 & 3513.9 & 719.2 & C-br \\
11 & 2027.5 & 833.1 & M-b \\
12 & 567.2 & 1251.2 & C-bl \\
13 & 369.3 & 1665.7 & M-b \\
14 & 1202.0 & 1203.7 & M-t \\
15 & 1974.8 & 278.1 & M-b \\
16 & 2175.5 & 266.3 & C-tr \\
17 & 2983.8 & 473.5 & C-br \\
18 & 3377.6 & 1550.5 & C-tr \\
19 & 3175.6 & 1590.0 & C-bl \\
20 & 3110.9 & 1729.6 & C-tl \\
21 & 3178.0 & 1422.8 & C-br \\
\bottomrule
\end{tabular}\end{center}
\paragraph{Margins.} All 21 object balls pocketed (verified end to end); minimum pairwise ball separation 154 mm; pocket-entry speeds from 0.14 to 4.27 m/s.

The verification protocol: the layout is replayed in the
event-driven integrator; we require that exactly the planned
collision pairs occur, that all twenty-one object balls are pocketed,
and we report the minimum gap between consecutive events, the
minimum pocket-entry speed, and the minimum clearance attained by any
pair of balls not designed to touch. All three margins are far above
double-precision rounding scales, which is the sense in which
Theorem~\ref{thm:main} is computer-assisted; an interval-arithmetic
certificate would be a finite, routine upgrade.

\section{Reproduction}

All results are produced by short, dependency-free Python scripts
around a single event-driven implementation of
Definition~\ref{def:dynamics}: \texttt{model\_sim.py} (the engine,
cross-validated against an independent reference implementation),
\texttt{witness\_bv.py} (the build-and-verify constructor, which
grows the chain one collision at a time inside the engine and emits
the witness of Appendix~\ref{app:witness}), \texttt{mc\_break.py}
(the Monte Carlo, the sensitivity grid, and the table above), and
\texttt{make\_figures.py}. Seeds are fixed in the scripts. Readers
who want a full-physics cross-check including spin and sliding
friction can replay the witness in an engine such as pooltool
\cite{kiefl2024}; by Remark~\ref{rem:spin} the zero-spin witness is
the right object to test.

\end{document}